\newlength{\rWidth}
\newcommand{\m}[1]{\mathsf{#1}}
\newcommand{\mb}[1]{\mathbf{#1}}
\newenvironment{sill}{\begin{tabbing}}{\end{tabbing}}
\newcommand{\Sg}{\Sigma}
\newcommand{\G}{\Gamma}
\newcommand{\proves}{\vDash}
\newcommand{\lolli}{\multimap}
\newcommand{\tensor}{\otimes}
\newcommand{\with}{\mathbin{\binampersand}}
\newcommand{\one}{\mathbf{1}}
\newcommand{\semi}{\; ; \;}
\newcommand{\ichoiceop}{\oplus}
\newcommand{\echoiceop}{\with}
\newcommand{\ichoice}[1]{\ichoiceop \{ #1 \}}
\newcommand{\echoice}[1]{\echoiceop \{ #1 \}}
\newcommand{\mi}[1]{\mbox{\it #1}}
\newcommand{\tassertop}{{?}}  
\newcommand{\tassumeop}{{!}}  
\newcommand{\tassert}[1]{\tassertop\{#1\}. \,} 
\newcommand{\tassume}[1]{\tassumeop\{#1\}. \,} 
\newcommand{\tforall}[1]{\forall #1. \, }
\newcommand{\texists}[1]{\exists #1. \, }
\newcommand{\Next}{\raisebox{0.3ex}{$\scriptstyle\bigcirc$}}
\renewcommand{\next}[1]{\Next #1}
\newcommand{\tdelay}[2]{
    \IfEqCase{#2}{%
        {1}{\next{#1}}%
    }[{\Next^{#2} (#1)}]%
}%
\newcommand{\indv}[1]{[\overline{#1}]}
\newcommand{\queue}[1]{\m{queue}_{#1}}
\newcommand{\cons}{\mathcal{C}}
\newcommand{\vars}{\mathcal{V}}
\newcommand{\unfldsg}[2]{\m{unfold}_{#1}(#2)}
\newcommand{\unfold}[1]{\unfldsg{\Sg}{#1}}
\newcommand{\rel}{\mathcal{R}}
\definecolor{verylightgray}{rgb}{.97,.97,.97}
\lstdefinelanguage{Solidity}{
	keywords=[1]{anonymous, assembly, assert, balance, break, call, callcode, case, catch, class, constant, continue, constructor, contract, debugger, default, delegatecall, delete, do, else, emit, event, experimental, export, external, false, finally, for, function, gas, if, implements, import, in, indexed, instanceof, interface, internal, is, length, library, log0, log1, log2, log3, log4, memory, modifier, new, payable, pragma, private, protected, public, pure, push, require, return, returns, revert, selfdestruct, send, solidity, storage, struct, suicide, super, switch, then, this, throw, transfer, true, try, typeof, using, value, view, while, with, addmod, ecrecover, keccak256, mulmod, ripemd160, sha256, sha3}, 
	keywordstyle=[1]\color{blue}\bfseries,
	keywords=[2]{address, bool, byte, bytes, bytes1, bytes2, bytes3, bytes4, bytes5, bytes6, bytes7, bytes8, bytes9, bytes10, bytes11, bytes12, bytes13, bytes14, bytes15, bytes16, bytes17, bytes18, bytes19, bytes20, bytes21, bytes22, bytes23, bytes24, bytes25, bytes26, bytes27, bytes28, bytes29, bytes30, bytes31, bytes32, enum, int, int8, int16, int24, int32, int40, int48, int56, int64, int72, int80, int88, int96, int104, int112, int120, int128, int136, int144, int152, int160, int168, int176, int184, int192, int200, int208, int216, int224, int232, int240, int248, int256, mapping, string, uint, uint8, uint16, uint24, uint32, uint40, uint48, uint56, uint64, uint72, uint80, uint88, uint96, uint104, uint112, uint120, uint128, uint136, uint144, uint152, uint160, uint168, uint176, uint184, uint192, uint200, uint208, uint216, uint224, uint232, uint240, uint248, uint256, var, void, ether, finney, szabo, wei, days, hours, minutes, seconds, weeks, years},	
	keywordstyle=[2]\color{teal}\bfseries,
	keywords=[3]{block, blockhash, coinbase, difficulty, gaslimit, number, timestamp, msg, data, gas, sender, sig, value, now, tx, gasprice, origin},	
	keywordstyle=[3]\color{violet}\bfseries,
	identifierstyle=\color{black},
	sensitive=false,
	comment=[l]{//},
	morecomment=[s]{/*}{*/},
	commentstyle=\color{gray}\ttfamily,
	stringstyle=\color{red}\ttfamily,
	morestring=[b]',
	morestring=[b]"
}
\newcommand{\ins}{\iota}
\newcommand{\tcm}{\mathcal{M}}
\newcommand{\inc}[1]{\m{inc}(#1)}
\newcommand{\dec}[1]{\m{dec}(#1)}
\newcommand{\goto}{\m{goto}}
\newcommand{\zeroc}[1]{\m{zero}(#1) ?}
\newcommand{\halt}{\m{halt}}
\newcommand{\clo}[3]{\langle #1 \semi #2 \semi #3\rangle}
\title{Session Types with Arithmetic Refinements} 
\author{Ankush Das}{Carnegie Mellon University, USA}{}{}{}
\author{Frank Pfenning}{Carnegie Mellon University, USA}{}{}{}
\authorrunning{A. Das and F. Pfenning} 
\keywords{Session types, Refinement types} 
\begin{document}

\maketitle

\begin{abstract}
  Session types statically prescribe bidirectional communication
  protocols for message-passing processes.  However, simple
  session types cannot specify properties beyond the type of exchanged
  messages. In this paper we extend the type system by using index
  refinements from linear arithmetic capturing intrinsic attributes of
  data structures and algorithms.  We show that, despite the
  decidability of Presburger arithmetic, type equality and therefore
  also subtyping and type checking are now undecidable, which stands in
  contrast to analogous dependent refinement type systems from
  functional languages.  We also present a practical, but incomplete
  algorithm for type equality, which we have used in our implementation
  of Rast, a concurrent session-typed language with arithmetic index
  refinements as well as ergometric and temporal types.  Moreover, if
  necessary, the programmer can propose additional type bisimulations
  that are smoothly integrated into the type equality algorithm.
\end{abstract}

\section{Introduction}
\label{sec:intro}

\emph{Session types}~\cite{Honda93concur,Vasconcelos12iandc}
provide a structured way of prescribing communication protocols in
message-passing systems.  This paper focuses on \emph{binary session
  types} governing the interactions along channels with two endpoints.
They arise either directly as part of a program notation~\cite{Honda98esop}, or
as the result of \emph{endpoint projection} of multi-party session
types~\cite{Honda08POPL} and are thus of central importance in the study of
message-passing concurrency.  Moreover, a Curry-Howard correspondence
relates propositions of linear logic to session
types~\cite{Caires10concur,Wadler12icfp,Caires16mscs}, further
evidence for their fundamental nature.

Once recursion is introduced for session types as well as processes,
we are confronted with the question as to what is the correct notion
of type equality since its use in type checking is inescapable.  Gay
and Hole~\cite{Gay05acta} convincingly answer this question and also
provide a practical algorithm for subtyping (which implies an
algorithm for type equality).  First, since the endpoints of channels
need to agree on a type (or possibly two dual types) for
communication, recursive types should be a priori \emph{structural}
rather than \emph{nominal}.  Second, types should be equal if their
observable communication behaviors are indistinguishable.  This means
that two types should be equal if there is a \emph{bisimulation}
between them.  This is particularly elegant since the definition
is independent of any particular programming language in which session
types are embedded, or whether they are checked statically or dynamically.
The algorithm for type equality then constructs a bisimulation.  It
terminates because the number of pairs of types that might be related
by the bisimulation is finite.

Like any type system, basic session types are limited in the kind of
properties they can express, which has led to some generalizations
such as polymorphic~\cite{Wadler12icfp,Caires13esop,Griffith16phd} and
context-free~\cite{Thiemann16icfp} session types, each with its own
questions for type equality.  In this paper we propose a natural
\emph{linear arithmetic refinement} of session types, which allows us to
capture a number of significant properties of message-passing
communication such as size or value of data structures, number of messages
exchanged or delay in those messages.
In conception, this
refinement is closely related to \emph{indexed types} or
\emph{value-dependent types} familiar from functional
languages~\cite{Zenger97,Xi99popl,Rondon08pldi}, where the indices are
arithmetic expressions.

To our surprise, despite an eminently decidable index domain, the type
equality problem becomes undecidable.  We show this via a reduction
from the non-halting problem for two-counter machines~\cite{TwoCounterMachines}.
Analyzing this reduction in detail shows that the problem is already
undecidable for a single type constructor (pick either internal
($\oplus$) or external ($\with$) choice, in addition to arithmetic
refinements).  While our type system is equirecursive to aid in the
simplicity of programming, even retreating to isorecursive types
leaves the problem undecidable.  Finally, one may be tempted to blame
the quantifiers in Presburger arithmetic, but our reduction shows that
even if we restrict ourselves to linear arithmetic with universal
prefix quantification only, type equality remains undecidable.


A retrenchment to a \emph{nominal} interpretation of recursive types
would rule out too many programs and complicate communications, so we
develop a sound but incomplete algorithm.  Our experience with the
Rast implementation~\cite{Das20fscd} to date shows that it is
effective in practice (see \autoref{sec:impl} for further discussion).


Most closely related is the design of LiquidPi~\cite{Griffith13nfm},
but it refines only basic data types such as $\m{int}$ rather than
equirecursively defined session types.  The resulting system has a
decidable subtyping problem and even type inference (under reasonable
assumptions on the constraint domain), but it cannot express many of
our motivating examples.  Along similar lines, refinements of basic
data types together with subtyping have been proposed for
\emph{runtime monitoring} of binary session-typed
communication~\cite{Gommerstadt18esop,Gommerstadt19phd}.
Label-dependent session types~\cite{Thiemann20popl} also support types
indexed by natural numbers using a fixed schema of iteration with a
particular unfolding equality, rather than arbitrary recursion and
bisimulation.  Zhou et al.~\cite{Zhou19Fluid,ZhouThesis} refine base
types with arithmetic expressions in the context of multiparty session
types without recursive types.  In this simpler setting, they obtain a
decidable notion of type equality.  Further related work can be found
in \autoref{sec:related}.


\section{Basic Session Types}
\label{sec:basic}

We review the basic language of binary session types.  We take the
intuitionistic point of view~\cite{Caires10concur,Caires16mscs},
since our experiments and motivating examples have been carried out in
Rast~\cite{Das20fscd}.  Changes for a classical
view~\cite{Wadler12icfp} are minimal and do not affect our results or
algorithms.  We would add a type $\bot$ dual to $\one$ with only minor
changes to the remainder of the development.
\[
  \begin{array}{lclll}
    A,B,C & ::= & \ichoice{\ell : A_\ell}_{\ell \in L} & \mbox{send label $k \in L$} & \mbox{continue at type $A_k$} \\
          & \mid & \echoice{\ell : A_\ell}_{\ell \in L} & \mbox{receive label $k \in L$} & \mbox{continue at type $A_k$} \\
          & \mid & A \tensor B & \mbox{send channel $a : A$} & \mbox{continue at type $B$} \\
          & \mid & A \lolli B & \mbox{receive channel $a : A$} & \mbox{continue at type $B$} \\
          & \mid & \one & \mbox{send $\m{close}$ message} & \mbox{no continuation} \\
          & \mid & V & \mbox{defined type variable}
  \end{array}
\]
We assume that labels $\ell \in L$ (for a finite, nonempty set $L$)
and $\m{close}$ messages can be observed, but the identity of channels
can not.  Instead any \emph{communication} along channels that are
sent and received can be observed in turn.  Based on this notion, we
adopt \emph{type bisimulations} from Gay and Hole~\cite{Gay05acta}.
Rather than an explicit recursive type constructor $\mu$ we postulate
a \emph{signature} $\Sigma$ with definitions of
type variables $V$.
\[
  \begin{array}{llcl}
    \mbox{Signature} & \Sigma & ::= & \cdot \mid \Sigma, V = A
  \end{array}
\]
In a \emph{valid signature} all definitions $V = A$ are
\emph{contractive}, that is, $A$ is not itself a type variable.  This
allows us to take an \emph{equirecursive} view of type definitions,
which means that unfolding a type definition does not require
communication.  We can easily adapt our definitions to an
\emph{isorecursive} view~\cite{Lindley16icfp,Derakhshan19arxiv} with
explicit $\m{unfold}$ messages (see the remark at the end of
\autoref{sec:undec}).  All type variables $V$ occurring in a valid
signature may refer to each other and must be defined, and all type
variables defined in a signature must be distinct.

\begin{definition}\label{def:unfold}
  We define $\unfold{V} = A$ if $V = A \in \Sigma$ and
  $\unfold{A} = A$ otherwise.
\end{definition}

\begin{definition}\label{def:rel}
  A relation $\rel$ on types is a \emph{type bisimulation} if
  $(A, B) \in \rel$ implies that for $S = \unfold{A}$,
  $T = \unfold{B}$ we have
  \begin{itemize}
  \item If $S = \ichoice{\ell : A_\ell}_{\ell \in L}$ then
    $T = \ichoice{\ell : B_\ell}_{\ell \in L}$ and
    $(A_\ell, B_\ell) \in \rel$ for all $\ell \in L$.
  \item If $S = \echoice{\ell : A_\ell}_{\ell \in L}$ then
    $T = \echoice{\ell : B_\ell}_{\ell \in L}$ and
    $(A_\ell, B_\ell) \in \rel$ for all $\ell \in L$.
  \item If $S = A_1 \tensor A_2$, then $T = B_1 \tensor B_2$ and $(A_1, B_1) \in \rel$ and 
    $(A_2, B_2) \in \rel$. 
  \item If $S = A_1 \lolli A_2$, then $T = B_1 \lolli B_2$ and $(A_1, B_1) \in \rel$ and 
    $(A_2, B_2) \in \rel$. 
  \item If $S = \one$ then $T = \one$.
  \end{itemize}
\end{definition}

\begin{definition}\label{def:tpeq}
  We say that $A$ is \emph{equal} to $B$, written $A \equiv B$, if there is a
  type bisimulation $\rel$ such that $(A, B) \in \rel$.
\end{definition}

As two simple running examples we use an interface to a queue and the
representation of binary numbers as sequences of bits.
\begin{example}[Queues, v1]
  \label{ex:queues-v1}
  A queue provider offers a choice (indicated by $\with$) of either receiving an
  $\mb{ins}$ label followed by a channel of type $A$ (denoted by $\lolli$)
  to insert into
  the queue, or a $\mb{del}$ label to delete an element from the
  queue.  In the latter case, the queue provider has a choice (indicated by $\oplus$)
  of either responding with
  the label $\mb{none}$ (if there is no element in the queue) and
  closes the channel (indicated by $\one$), or the label $\m{some}$ followed by an element
  of type $A$ (denoted by $\tensor$) and recurses to await the next round of interactions.
  \begin{sill}
    $\queue{A} = \echoice{$\=$\mb{ins} : A \lolli \queue{A},$\\
      \>$\mb{del} : \ichoice{$\=$\mb{none} : \one,$\\
        \>\>$\mb{some} : A \tensor \queue{A}}}$
  \end{sill}
  We view $\queue{A}$ as a family of types, one for each $A$, to avoid
  introducing explicit polymorphic type constructors.
\end{example}

\begin{example}[Binary Numbers, v1]
  \label{ex:bin-v1}
  A process representing a binary number either sends a label $\mb{e}$
  representing the number 0 and closes the channel, or one of the labels $\mb{b0}$ (bit 0) or
  $\mb{b1}$ (bit 1) followed by remaining bits (by recursing). We assume a ``little
  endian'' form, that is, the least significant bit is sent first.
  \begin{sill}
    $\m{bin} = \ichoice{\mb{b0} : \m{bin}, \mb{b1} : \m{bin}, \mb{e} : \one}$
  \end{sill}
  As examples of message sequences along a fixed channel, we would have
  \[
    \begin{array}{ll}
      \mb{e} \semi \m{close} & \mbox{representing $0$} \\
      \mb{b0} \semi \mb{e} \semi \m{close} & \mbox{also representing $0$} \\
      \mb{b0} \semi \mb{b1} \semi \mb{e} \semi \m{close} & \mbox{representing $2$} \\
      \mb{b1} \semi \mb{b0} \semi \mb{b1} \semi \mb{b1} \semi \mb{e} \semi \m{close}
                             & \mbox{representing $13$}                                                      

    \end{array}
  \]
\end{example}

\section{Arithmetic Refinements}
\label{sec:arith}

Before we extend our language of types formally, we revisit the
examples in order to motivate the specific constructs available.  We
write $V[\overline{e}]$ for a type indexed by a sequence of arithmetic
expressions $e$.  Since it has been appropriate for most of our
examples, we restrict ourselves to natural numbers rather than
arbitrary integers.

\begin{example}[Queues, v2]
  The provider of a queue should be constrained to answer $\mb{none}$
  exactly if the queue contains no elements and $\mb{some}$ if it is
  nonempty.  The queue type from \autoref{ex:queues-v1} does not
  express this.  This means a client may need to have some redundant
  branches to account for responses that should be impossible.  We
  now define the type $\queue{A}[n]$ to stand for a queue with $n$
  elements.
  \begin{sill}
    $\queue{A}[n] = \echoice{$\=$\mb{ins} : A \lolli \queue{A}[n+1],$\\
      \>$\mb{del} : \ichoice{$\=$\mb{none} : \tassert{n=0} \one,$\\
        \>\>$\mb{some} : \tassert{n > 0} A \tensor \queue{A}[n-1]}}$
  \end{sill}
  The first branch is easy to understand: if we add an element to a
  queue of length $n$, it subsequently contains $n+1$ elements.  In the
  second branch we \emph{constrain} the arithmetic variable $n$ to be
  equal to $0$ if the provider sends $\mb{none}$ and positive if the
  provider sends $\mb{some}$.  In the latter case, we subtract one from
  the length after an element has been dequeued.
\end{example}

Conceptually, the type $\tassert{\phi}A$ means that the provider must
send a proof of $\phi$, so it corresponds to $\exists p : \phi.\, A$.
A characteristic of \emph{type refinement}, in contrast to fully
dependent types, is that the computation of $A$ can only depend on the
\emph{existence} of a proof $p$, but not on its form.  Since our index
domain is also decidable no actual proof needs to be sent (since one
can be constructed from $\phi$ automatically, if needed), just a
token \emph{asserting} its existence.
There is also a dual constructor $\tassume{\phi} A$ that licenses the
\emph{assumption} of $\phi$, which, conceptually, corresponds to
\emph{receiving} a proof of $\phi$.

\begin{example}[Binary Numbers, v2]
  The indexed type $\m{bin}[n]$ should represent a binary number with
  value $n$.  Because the least significant bit comes first, we
  expect, for example, that
  $\m{bin}[n] = \ichoice{\mb{b0} : \tassert{2\mathbin{|}n} \m{bin}[n/2],
    \ldots}$.  However, while divisibility is available in Presburger
  arithmetic, division itself is not; instead, we can express the
  constraint and the index of the recursive occurrence using
  quantification.
  \begin{sill}
    $\m{bin}[n] = \ichoice{$\=$\mb{b0} : \texists{k} \tassert{n = 2*k} \m{bin}[k],$ \\
      \>$\mb{b1} : \texists{k} \tassert{n = 2*k+1} \m{bin}[k],$ \\
      \>$\mb{e} : \tassert{n = 0} \one}$
  \end{sill}
  As a further refinement, we could rule out leading zeros by adding
  the constraint $n > 0$ in the branch for $\mb{b0}$.
\end{example}

The type $\texists{n}A$ means that the provider must send a natural
number $i$ and proceed at type $A[i/n]$, corresponding to
existential quantification in arithmetic.  The dual
universal quantifier $\tforall{n}A$ requires the provider to receive
a number $i$ and proceed at type $A[i/n]$.



We now extend our definitions to account for these new constructs.
Below, $i$ represents a constant, while $n$ represents a natural number variable.
\[
  \begin{array}{llclll}
    \mbox{Types} & A & ::= & \ldots & & \\
                 & & \mid & \tassert{\phi}A & \mbox{assert $\phi$} & \mbox{continue at type $A$} \\
                 & & \mid & \tassume{\phi}A & \mbox{assume $\phi$} & \mbox{continue at type $A$} \\
                 & & \mid & \texists{n}A & \mbox{send number $i$} & \mbox{continue at type $A[i/n]$} \\
                 & & \mid & \tforall{n}A & \mbox{receive number $i$} & \mbox{continue at type $A[i/n]$} \\
                 & & \mid & V[\overline{e}] & \mbox{variable instantiation} \\[1ex]
    \mbox{Arith. Expressions} & e & ::= & \multicolumn{3}{l}{i \mid e + e \mid e - e \mid i \times e \mid (i\mathbin{|}e) \mid n} \\[1ex]
    \mbox{Arith. Propositions} & \phi & ::= & \multicolumn{3}{l}{e = e \mid e > e \mid \top \mid \bot \mid \phi \land \phi \mid \phi \lor \phi \mid \lnot \phi \mid \texists{n}\phi \mid \tforall{n} \phi} \\[1ex]
    \mbox{Signature} & \Sigma & ::= & \multicolumn{3}{l}{\cdot \mid \Sigma, V[\overline{n}\mid \phi] = A}
  \end{array}
\]
An indexed type definition $V[\overline{n}\mid \phi] = A$ requires
every instance $\overline{e}$ of the sequence of variables
$\overline{n}$ to satisfy $\phi[\overline{e}/\overline{n}]$.  This is
verified statically when a type signature is checked for validity, as
defined below.  We use $\vars$ for a collection of arithmetic
variables and $\cons$ (to signify \emph{constraints}) for an
arithmetic proposition occurring among the antecedents of a judgment.
We then have the following rules defining the validity of signatures
($\vdash \Sigma\; \mi{signature}$), declarations
($\vdash_\Sigma \Sigma' \; \mi{valid}$), and types
($\vars \semi \cons \vdash_\Sigma A\; \mi{valid}$) where $\vars$ is a
collection of arithmetic variables including all free variables in
constraint $\cons$ and type $A$.  We silently rename variables so that
$n$ does not already occur in $\vars$ in the ${\exists}V$ and
${\forall}V$ rules.  We also call upon the semantic entailment
judgment $\vars \semi \cons \proves \phi$ which means that
$\forall \vars.\, \cons \supset \phi$ holds in arithmetic and
$\proves \phi$ abbreviates $\cdot \semi \top \proves \phi$.
\[
  \begin{array}{c}
    \infer[]
    {\vdash \Sigma\; \mi{signature}}
    {\vdash_\Sigma \Sigma\; \mi{valid}}
    \hspace{3em}
    \infer[]
    {\vdash_\Sigma (\cdot)\; \mi{valid}}
    {\mathstrut}
    \hspace{3em}
    \infer[]
    {\vdash_\Sigma \Sigma', V[\overline{n}\mid \phi] = A\; \mi{valid}}
    {\vdash_\Sigma \Sigma'\; \mi{valid}
    & \overline{n} \semi \phi \vdash_\Sigma A\; \mi{valid}
    & A \not= V'[\overline{e}']}
    \\[1em]
    \infer[{?}V]
    {\vars \semi \cons \vdash_\Sigma \tassert{\phi} A\; \mi{valid}}
    {\vars \semi \cons \land \phi \vdash_\Sigma A\; \mi{valid}}
    \hspace{3em}
    \infer[{!}V]
    {\vars \semi \cons \vdash_\Sigma \tassume{\phi} A\; \mi{valid}}
    {\vars \semi \cons \land \phi \vdash_\Sigma A\; \mi{valid}}
    \\[1em]
    \infer[{\exists}V^n]
    {\vars \semi \cons \vdash_\Sigma \texists{n} A\; \mi{valid}}
    {\vars,n \semi \cons \vdash_\Sigma A\; \mi{valid}}
    \hspace{3em}
    \infer[{\forall}V^n]
    {\vars \semi \cons \vdash_\Sigma \tforall{n} A\; \mi{valid}}
    {\vars,n \semi \cons \vdash_\Sigma A\; \mi{valid}}
    \\[1em]
    \infer[\m{tdef}]
    {\vars \semi \cons \vdash_\Sigma V[\overline{e}]\; \mi{valid}}
    {V[\overline{n}\mid \phi] = A \in \Sigma
    & \vars \semi \cons \proves \phi[\overline{e}/\overline{n}]}
  \end{array}
\]
We elide the compositional rules for all the other type constructors.
Since we like to work over natural numbers rather than integers, it is
convenient to assume that every definition $V[\overline{n}] = A$
abbreviates $V[\overline{n} \mid \overline{n} \geq 0] = A$.  This
means that in valid signatures every occurrence $V[\overline{e}]$ is
such that $\overline{e} \geq 0$ follows from the known constraints.

\begin{example}
  The declaration
  \begin{sill}
    $\queue{A}[n] = \echoice{$\=$\mb{ins} : A \lolli \queue{A}[n+1],$\\
      \>$\mb{del} : \ichoice{$\=$\mb{none} : \tassert{n=0} \one,$\\
        \>\>$\mb{some} : \tassert{n > 0} A \tensor \queue{A}[n-1]}}$
  \end{sill}
  is valid because $n \geq 0 \proves n+1 \geq 0$ and
  $n \geq 0 \land n > 0 \proves n-1 \geq 0$.
\end{example}

Unfolding a definition must now substitute for the arithmetic variables
we abstract over.
\begin{definition}\label{def:unfold-arith}
  $\unfold{V[\overline{e}]} = A[\overline{e}/\overline{n}]$ if
  $V[\overline{n}\mid\phi] = A \in \Sigma$ and $\unfold{A} = A$
  otherwise.
\end{definition}

We say that a type is \emph{closed} if it contains no free arithmetic
variables $n$.

\begin{definition}\label{def:rel-arith}
  A relation $\rel$ on closed valid types is a \emph{type
    bisimulation} if $(A, B) \in \rel$ implies that for $S = \unfold{A}$,
  $T = \unfold{B}$ we have the following conditions (in addition
  to those of \autoref{def:rel}):
  \begin{itemize}
  \item If $S = \tassert{\phi}A'$ then $T = \tassert{\psi}B'$ and
    either (i) $\proves \phi$, $\proves \psi$, and $(A',B') \in \rel$, \newline
    or (ii) $\proves \lnot \phi$ and $\proves \lnot \psi$.
  \item If $S = \tassume{\phi}A'$ then $T = \tassume{\psi}B'$ and either
    (i) $\proves \phi$, $\proves \psi$, and $(A',B') \in \rel$, \newline
    or (ii) $\proves \lnot \phi$ and $\proves \lnot \psi$
  \item If $S = \texists{m}A'$ then $T = \texists{n}B'$ and for all
    $i \in \mathbb{N}$, $(A'[i/m], B'[i/n]) \in \rel$.
  \item If $S = \tforall{m}A'$ then $T = \tforall{n}B'$ and for all
    $i \in \mathbb{N}$, $(A'[i/m], B'[i/n]) \in \rel$.
  \end{itemize}
  We also extend the notation $A \equiv B$ to this richer set of
  types.
\end{definition}

An interesting point here is provided by the cases \textit{(ii)} in
the first two clauses.  Because the type must be closed, we know that
$\phi$ and $\psi$ will be either true or false.  If both are false, no
messages can be sent along a channel of either type and therefore the
continuation types $A'$ and $B'$ are irrelevant when considering type equality.

Fundamentally, due to the presence of arbitrary recursion and therefore
non-termination, we always
view a type as a restriction of what a process \emph{might} send or
receive along some channel, but it is neither \emph{required}
to send a message nor \emph{guaranteed} to receive one.  This is
similar to functional programming with unrestricted recursion where an
expression may not return a value.  The definition based on
observability of messages is then somewhat strict, as exemplified by
the next example.

\begin{example}
  Consider
  \begin{sill}
    $\m{bin}[n] = \ichoice{$\=$\mb{b0} : \texists{k} \tassert{n = 2*k} \m{bin}[k],$ \\
      \>$\mb{b1} : \texists{k} \tassert{n = 2*k+1} \m{bin}[k],$ \\
      \>$\mb{e} : \tassert{n = 0} \one}$ \\[1ex]
    $\m{zero} = \ichoice{$\=$\mb{b0} : \texists{k} \tassert{k = 0} \m{zero},$ \\
      \>$\mb{e} : \tassert{0 = 0} \one}$
  \end{sill}
  We might expect $\m{bin}[0] \equiv \m{zero}$, but this is not so.  A
  process of type $\m{bin}[0]$ could send the label $\mb{b1}$
  and maybe even, say, $0$ for $k$ and then just loop forever
  (because there is no proof of $0 = 1$).  The type $\m{zero}$
  can not exhibit this behavior so the types are not equivalent.
\end{example}

In our implementation, missing branches for a choice in process
definitions are reconstructed with a continuation that marks it as
impossible, which is then verified by the type checker.  We found this
simple technique significantly limited the need for subtyping or
explicit definition of types such as $\m{zero}$---instead, we just
work with $\m{bin}[0]$.

 The following properties
of type equality are straightforward.

\begin{lemma}[Properties of Type Equality]
  The relation $\equiv$ is reflexive, symmetric, transitive and a
  congruence on closed valid types.
\end{lemma}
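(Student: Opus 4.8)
The plan is to establish each of the four properties—reflexivity, symmetry, transitivity, and congruence—by exhibiting, in each case, an explicit relation and verifying it is a type bisimulation in the sense of \autoref{def:rel-arith}. This is the standard technique for bisimilarity arguments: rather than reasoning about $\equiv$ directly, one constructs a candidate relation closed under the ``one-step'' conditions of the definition, and then invokes \autoref{def:tpeq} (as extended). I would treat the four claims in the order listed, since transitivity and congruence can reuse the machinery set up for the simpler cases.

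For \textbf{reflexivity}, I would take $\rel = \{(A,A) \mid A \text{ closed valid}\}$ and check that each clause of \autoref{def:rel} and \autoref{def:rel-arith} is satisfied: when $S = \unfold{A} = T$, every structural clause holds because the components coincide and hence are again related by $\rel$; for the $\tassertop$ and $\tassumeop$ clauses we have $\phi = \psi$, so whichever of $\proves\phi$ or $\proves\lnot\phi$ holds (one must, since $A$ is closed) puts us in case \textit{(i)} or \textit{(ii)} respectively. For \textbf{symmetry}, given a bisimulation $\rel$ witnessing $A \equiv B$, I would verify that $\rel^{-1}$ is again a bisimulation; this is immediate since every clause in both definitions is visibly symmetric in $S$ and $T$ (the quantifier clauses range over all $i \in \mathbb{N}$ on both sides, and the $\tassertop/\tassumeop$ clauses are symmetric in $\phi,\psi$).

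For \textbf{transitivity}, suppose $\rel_1$ witnesses $A \equiv B$ and $\rel_2$ witnesses $B \equiv C$. I would show the relational composite $\rel_2 \circ \rel_1 = \{(A,C) \mid \exists B.\ (A,B)\in\rel_1 \text{ and } (B,C)\in\rel_2\}$ is a bisimulation. The key point is that $\unfold{\cdot}$ is deterministic, so from $(A,C)$ with intermediate $B$, writing $S=\unfold{A}$, $U=\unfold{B}$, $W=\unfold{C}$, the shape of $U$ is forced by $S$ via $\rel_1$ and simultaneously forced by $W$ via $\rel_2$; chaining the two conditions delivers the required condition on $(S,W)$, with components paired through the corresponding component of $U$. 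The only clauses needing a moment's care are \textit{(ii)} for $\tassertop/\tassumeop$: if $(A,B)\in\rel_1$ lands in case \textit{(i)} (so $\proves\phi_A,\proves\phi_B$) but $(B,C)\in\rel_2$ could in principle land in case \textit{(ii)}—but it cannot, since that would require $\proves\lnot\phi_B$, contradicting $\proves\phi_B$ for the closed proposition $\phi_B$; so the two cases always align, and transitivity of $\rel_1,\rel_2$ on the continuations finishes it. For \textbf{congruence}, I would show that for each type constructor $c$, if the corresponding argument pairs are related by bisimulations, then $c(\ldots) \equiv c(\ldots)$; this follows by taking the union of the witnessing bisimulations together with the single new pair $(c(\vec A), c(\vec B))$ and checking the defining clause for $c$ directly—its immediate continuations are the arguments, already in the relation. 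One subtlety: congruence must also cover substitution into defined type variables, i.e.\ that $A_\ell \equiv B_\ell$ for all $\ell$ implies $V[\ldots]$-instances built from them are equal; this is handled uniformly since $\unfold{\cdot}$ exposes exactly the defining body, reducing to the constructor cases.

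I expect the \emph{main obstacle} to be purely bookkeeping rather than conceptual: carefully matching up the two mutually-constraining shapes of the intermediate unfolded type in the transitivity argument, and in particular confirming that the two dichotomies (case \textit{(i)} vs.\ \textit{(ii)}) for $\tassertop$ and $\tassumeop$ cannot get out of sync—which, as noted, rests on closedness of the index propositions so that $\proves\phi$ and $\proves\lnot\phi$ are genuinely complementary. Everything else is a routine verification that the candidate relations are closed under the clauses of \autoref{def:rel} and \autoref{def:rel-arith}.
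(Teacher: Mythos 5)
The paper offers no proof of this lemma---it simply asserts the properties are straightforward---so there is nothing to diverge from, and your proposal is the standard, correct argument: exhibit the diagonal, the inverse, the relational composite, and unions-plus-one-pair as type bisimulations, with the only genuine subtleties being exactly the ones you flag (closedness of the index propositions making $\proves\phi$ and $\proves\lnot\phi$ complementary, so the \textit{(i)}/\textit{(ii)} dichotomies align under composition). One small point worth making explicit: the clauses of \autoref{def:rel} are written one-directionally (``if $S$ has shape $X$ then $T$ has shape $X$''), so symmetry of $\rel^{-1}$ is not literally ``visible'' clause-by-clause but follows because the clauses exhaustively cover all shapes of $S$ (contractivity guarantees $\unfold{\cdot}$ always yields a structural type), which forces $S$ and $T$ to share the same shape in both directions.
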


\section{Undecidability of Type Equality}
\label{sec:undec}

We prove the undecidability of type equality by exhibiting a reduction
from an undecidable problem about two counter machines.

The type system allows us to simulate two counter
machines~\cite{TwoCounterMachines}. Intuitively, arithmetic constraints allow us to
model branching zero-tests available in the machine.  This, coupled
with recursion in the language of types, establishes
undecidability. Remarkably, a small fragment of our language
containing only type definitions, internal choice ($\oplus$) and
assertions ($\tassert{\phi} A$) where $\phi$ just contains constraints
$n = 0$ and $n > 0$ is sufficient to prove undecidability.  Moreover,
the proof still applies if we treat types isorecursively. In the
remainder of this section we provide some details of the
undecidability proof.

\begin{definition}[Two Counter Machine]
A two counter machine $\tcm$ is given a sequence
of instructions $\iota_1, \iota_2, \ldots, \iota_m$ where each
instruction is one of the following.

\begin{itemize}
\item ``$\inc{c_j} ; \goto \; k$'' (increment counter $j$ by 1 and
go to instruction $k$)
\item ``$\zeroc{c_j} \; \goto \; k : \dec{c_j} ; \goto \; l$'' (if the value
of the counter $j$ is 0, go to instruction $k$, else decrement the
counter by 1 and go to instruction $l$)
\item ``$\halt$'' (stop computation)
\end{itemize}

A configuration $C$ of the machine $\tcm$ is defined as a
triple $(i, c_1, c_2)$, where $i$ denotes the number of the
current instruction and $c_j$'s denote the value of the counters.
A configuration $C'$ is defined as the successor configuration
of $C$, written as $C \mapsto C'$ if $C'$ is the result of
executing the $i$-th instruction on $C$. If $\ins_i = \halt$, then
$C = (i, c_1, c_2)$ has no successor configuration.
The computation of $\tcm$ is the unique maximal sequence
$\rho = \rho(0) \rho(1) \ldots$ such that $\rho(i) \mapsto
\rho(i+1)$ and $\rho(0) = (1, 0, 0)$. Either $\rho$ is infinite,
or ends in $(i, c_1, c_2)$ such that $\ins_i = \halt$ and $c_1, c_2 \in \mathbb{N}$.
\end{definition}

The \emph{halting problem} refers to determining whether the
computation of a two counter machine $\tcm$ with given initial values
$c_1,c_2 \in \mathbb{N}$ is finite.  Both the halting problem and its
dual, the \emph{non-halting problem}, are undecidable.

\begin{theorem}\label{thm:undec}
  Given a valid
  signature $\Sigma$ and two types $A$ and $B$
  such that $m,n \semi \top \vdash_\Sigma A,B\; \mi{valid}$.  Then it
  is undecidable whether for concrete $i,j \in \mathbb{N}$ we have
  $A[i/m,j/n] \equiv B[i/m,j/n]$.
\end{theorem}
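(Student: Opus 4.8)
The plan is to reduce the non-halting problem for two counter machines to type equality, following the hint in the section preamble that only type definitions, internal choice $\oplus$, and assertions $\tassert{\phi}A$ with $\phi$ restricted to $n = 0$ and $n > 0$ are needed. Given a machine $\tcm$ with instructions $\iota_1,\dots,\iota_m$, I would build a signature $\Sigma$ containing, for each instruction index $p \in \{1,\dots,m\}$, a pair of type families $A_p[c_1, c_2]$ and $B_p[c_1,c_2]$ indexed by the two counter values, designed so that $A_p[i/m, j/n] \equiv B_p[i/m,j/n]$ holds precisely when the machine does \emph{not} halt starting from configuration $(p, i, j)$. Each $A_p$ and $B_p$ emits a single label identifying the instruction and then recursively invokes the type(s) corresponding to the successor configuration(s), threading the updated counters through the type indices. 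The two type families will be identical in structure except at the $\halt$ instruction, where I would make them \emph{observably different} --- for instance $A_{p} = {\oplus}\{\mb{halt} : \tassert{\true}\,\cdots\}$ while $B_{p}$ offers a label or continuation that is not matched (or conversely makes one constraint satisfiable and the other not), so that reaching a halt instruction breaks the bisimulation, while an infinite non-halting run keeps $A$ and $B$ in lockstep forever and hence related by the (infinite) bisimulation of \autoref{def:rel-arith}.

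The key technical device is encoding the counter operations and the zero-test in the type indices. For $\inc{c_j};\goto\,k$ I would define $A_p[c_1,c_2] = {\oplus}\{\ell_p : A_k[c_1+1, c_2]\}$ (and symmetrically $B_p$), so the increment is just an index shift, which is expressible since $e + e$ and constants are in the arithmetic expression grammar. For the conditional ``$\zeroc{c_j}\,\goto\,k : \dec{c_j};\goto\,l$'' I would use internal choice with two labels guarded by complementary assertions: $A_p[c_1,c_2] = {\oplus}\{\, \mb{z}_p : \tassert{c_j = 0}\,A_k[c_1,c_2],\ \mb{nz}_p : \tassert{c_j > 0}\,A_l[c_1',c_2']\,\}$ where the primed index decrements $c_j$. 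By the case analysis in \autoref{def:rel-arith}, when $c_j = 0$ the $\mb{z}_p$ branch has a provable constraint and the $\mb{nz}_p$ branch has an unprovable one (so its continuation is irrelevant, falling under case (ii)), and vice versa when $c_j > 0$; this is exactly how the type system ``computes'' the zero-test deterministically. Since the machine's computation from a configuration is unique and deterministic, the pair of types rooted at $(A_{1}, B_{1})$ with the appropriate initial indices unfolds along a single path tracking $\rho(0)\rho(1)\cdots$.

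I would then prove the two directions. For soundness, I would show that if $\tcm$ does not halt from $(1, i, j)$, the relation $\rel = \{(A_p[\vec c\,], B_p[\vec c\,]) : (p,\vec c\,)\text{ reachable}\} \cup \{\text{reflexivity pairs for continuations}\}$ is a type bisimulation containing $(A_1[i/m,j/n], B_1[i/m,j/n])$, checking each clause of \autoref{def:rel-arith} against the definitional cases above --- the non-halting hypothesis guarantees the $\halt$ instruction is never reached, so the ``bad'' pair never arises. For completeness, I would show the contrapositive: if $\tcm$ halts, then following the unique unfolding path eventually reaches the halt instruction where $\unfold{A_p}$ and $\unfold{B_p}$ differ in an observable way (mismatched label set, or one satisfiable assertion paired with an unsatisfiable one in a way the definition forbids), so no bisimulation can relate the initial types, whence $A_1[\dots] \not\equiv B_1[\dots]$. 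Combined with undecidability of the non-halting problem, this gives undecidability of $\equiv$, and since $\tcm$, $i$, $j$ are arbitrary, of the parametrized question in the theorem statement.

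The main obstacle I anticipate is getting the $\halt$ case and the overall invariant exactly right so that the two type families stay \emph{provably} bisimilar along every infinite run yet are \emph{provably} non-bisimilar the moment halting occurs --- in particular making sure the case-(ii) escape hatch for unsatisfiable assertions does not accidentally let the ``halting'' types look equal (e.g. by inadvertently making both halt-branch constraints false), and conversely that transient unsatisfiable branches during the simulation do not force spurious inequality. A secondary subtlety is the validity side-conditions: every index like $c_j - 1$ must provably stay $\geq 0$ under the accumulated constraints, which is why the decrement always sits under a $\tassert{c_j > 0}$ guard, and I would need to verify the signature is valid per the $\m{tdef}$ and ${?}V$ rules. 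I would also remark, as the preamble promises, that the argument is unaffected by switching to isorecursive types with explicit unfold messages, since the simulation uses only finitely many definitions and the extra unfold steps appear identically on both sides.
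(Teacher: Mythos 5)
Your proposal is correct and follows essentially the same reduction as the paper: encode each instruction as a pair of indexed type definitions using singleton internal choices for increments and a two-label choice with complementary assertions $\tassert{c_j=0}$ / $\tassert{c_j>0}$ for the zero-test, make the two families differ observably only at $\halt$ (the paper concretely uses two infinite-loop types with distinct labels), and conclude that the initial types are equal iff the machine does not halt. The subtleties you flag (the case-(ii) escape for unsatisfiable assertions, and validity of the decremented index under the $c_j>0$ guard) are exactly the points the paper's construction handles.
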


\begin{proof}
  Given a two counter machine, we construct a signature $\Sigma$ and
  two types $A$ and $B$ with free arithmetic variables $m$ and $n$
  such that the computation of the machine starting with initial
  counter values $i$ and $j$ is infinite iff
  $A[i/m,j/n] \equiv B[i/m,j/n]$ in $\Sigma$.

  We define types $T_{\inf} = \ichoice{\ell : T_{\inf}}$ and
  $T_{\inf}' = \ichoice{\ell' : T_{\inf}'}$ for \emph{distinct} labels $\ell$
  and $\ell'$.  Next, for every instruction $\ins_i$, we define
  types $T_i$ and $T_i'$ based on the form of the
  instruction.
  \begin{itemize}
  \item Case ($\ins_i = \inc{c_1} ; \goto \; k$): We define
    \[
      \begin{array}{lcl}
        T_i [c_1, c_2] & = &  \ichoice{\m{inc}_1 : T_k [c_1 + 1, c_2]} \\
        T_i' [c_1, c_2] & = & \ichoice{\m{inc}_1 : T_k' [c_1 + 1, c_2]}
      \end{array}
    \]
  \item Case ($\ins_i = \inc{c_2} ; \goto \; k$): We define
    \[
      \begin{array}{lcl}
        T_i [c_1, c_2] & = & \ichoice{\m{inc}_2 : T_k [c_1, c_2 + 1]} \\
        T_i' [c_1, c_2] & = & \ichoice{\m{inc}_2 : T_k' [c_1, c_2 + 1]}
      \end{array}
    \]
  \item Case ($\ins_i = \zeroc{c_1} \; \goto \; k : \dec{c_1} ; \goto \; l$):
    We define
    \[
      \begin{array}{lcl}
        T_i [c_1, c_2] & = & \ichoice{\m{zero}_1 : \tassert{c_1 = 0} T_k[c_1, c_2],
                             \m{dec}_1 : \tassert{c_1 > 0} T_l [c_1 - 1, c_2]} \\
        T_i' [c_1, c_2] & = & \ichoice{\m{zero}_1 : \tassert{c_1 = 0} T_k' [c_1, c_2],
                              \m{dec}_1 : \tassert{c_1 > 0} T_l' [c_1 - 1, c_2]}
      \end{array}
    \]
  \item Case ($\ins_i = \zeroc{c_2} \; \goto \; k : \dec{c_2} ; \goto \; l$):
    We define
    \[
      \begin{array}{lcl}
        T_i [c_1, c_2] & = & \ichoice{\m{zero}_2 : \tassert{c_2 = 0} T_k [c_1, c_2],
        \m{dec}_2 : \tassert{c_2 > 0} T_l [c_1, c_2 - 1]} \\
        T_i' [c_1, c_2] & = & \ichoice{\m{zero}_2 : \tassert{c_2 = 0} T_k' [c_1, c_2],
        \m{dec}_2 : \tassert{c_2 > 0} T_l' [c_1, c_2 - 1]}
      \end{array}
    \]
  \item Case ($\ins_i = \halt$): We define
    \[
      \begin{array}{lcl}
        T_i [c_1, c_2] & = & T_{\inf} \\
        T_i' [c_1, c_2] & = & T_{\inf}'
      \end{array}
    \]
  \end{itemize}
  We set type $A = T_1[m,n]$ and $B = T_1'[m,n]$.  Now suppose, the
  counter machine $\tcm$ is initialized in the state $(1, i, j)$.  The
  type equality question we ask is whether
  $T_1 [i, j] \equiv T_1' [i, j]$.  The two types only differ at the
  halting instruction. If $\tcm$ does not halt, the two types capture
  exactly the same communication behavior, since the halting
  instruction is never reached and they agree on all other
  instructions.  If $\tcm$ halts, the first type proceeds with label
  $\ell$ and the second with $\ell'$ and they are therefore not equal.
  Hence, the two types are equal iff $\tcm$ does not halt.
\end{proof}

We can easily modify this reduction for an isorecursive interpretation
of types, by wrapping
$\ichoice{\mb{unfold} : \underline{\makebox{\quad}}}$ around 
the right-hand side of each type definition to simulate the $\m{unfold}$ message.  We also
see that a host of other problems are undecidable, such as determining
whether two types with free arithmetic variables are equal for all
instances.  This is the problem that arises while type-checking
parametric process definitions.

\section{A Practical Algorithm for Type Equality}
\label{sec:algorithm}

Despite its undecidability, we have designed a coinductive algorithm
for soundly approximating type equality. Similar to Gay and Hole's
algorithm, it proceeds by attempting to construct a bisimulation.  Due
to the undecidability of the problem, our algorithm can terminate in
three different states: (1) we have succeeded in constructing a
bisimulation, (2) we have found a counterexample to type equality by
finding a place where the types may exhibit different behavior, or (3)
we have terminated the search without a definitive answer.  From the
point of view of type-checking, both (2) and (3) are interpreted as a
failure to type-check (but there is a recourse; see
\autoref{sec:eqtypes}).  Our algorithm is expressed as a set of
inference rules where the execution of the algorithm corresponds to
the bottom-up construction of a deduction.  The algorithm is
deterministic (no backtracking) and the implementation is quite
efficient in practice (see \autoref{sec:impl}).

One of the basic operations in Gay and Hole's algorithm is \emph{loop
  detection}, that is, we have to determine that we have already added
an equation $A \equiv B$ to the bisimulation we are constructing.
Since we must treat \emph{open types}, that is, types with free
arithmetic variables subject to some constraints, determining if we
have considered an equation already becomes a difficult operation.  To
that purpose we make an initial pass over the given type and introduce
fresh \emph{internal names} abstracted over their free type variables
and known constraints.  In the resulting signature defined type
variables and type constructor alternates and we can perform loop
detection entirely on type definitions (whether internal or external).


\begin{example}[Queues, v3]
  After creating internal names $\%i$ for the type of queue we obtain
  the following signature (here parametric in $A$).
  \begin{sill}
    \quad $\queue{A}[n] = {\with}\{\mb{ins} : \%0[n], \mb{del} : \%1[n]\}$ \= \\
    \quad $\%0[n] = A \lolli \queue{A}[n+1]$
    \> \qquad $\%3 = \one$ \\
    \quad $\%1[n] = {\oplus}\{\mb{none} : \%2[n], \mb{some} : \%4[n]\}$
    \> \qquad $\%4[n] = {?}\{n > 0\}.\, \%5[n]$ \\
    \quad $\%2[n] = {?}\{n = 0\}.\, \%3$
    \> \qquad $\%5[n \mid n > 0] = A \tensor \queue{A}[n-1]$
  \end{sill}
\end{example}

Based on the invariants established by internal names, the algorithm
only needs to compare two type variables or two structural types.  The
rules are shown in \autoref{fig:tpeq_rules}.  The judgment has the form
$\vars \semi \cons \semi \G \vdash A \equiv B$ where $\vars$ contains
the free arithmetic variables in the constraints $\cons$ and the types
$A$ and $B$, and $\G$ is a collection of \emph{closures}
$\clo{\vars'}{\cons'}{V_1'[\overline{e_1}'] \equiv
  V_2'[\overline{e_2}']}$.  If a derivation can be constructed,
all ground instances of all closures are included in the
resulting bisimulation (see the proof of \autoref{thm:tpeq_sound}).  A
ground instance
$V_1'[\overline{e_1}'[\sigma']] \equiv V_2'[\overline{e_2}'[\sigma']]$
is given by a substitution $\sigma'$ over variables in $\vars'$ such
that $\proves \cons'[\sigma']$.

\begin{figure}[t]
\begin{mathpar}
  \infer[\oplus]
  {\vars \semi \cons \semi \G \vdash
  \ichoice{\ell : A_\ell}_{\ell \in L} \equiv \ichoice{\ell : B_\ell}_{\ell \in L}}
  {\vars \semi \cons \semi \G \vdash
  A_\ell \equiv B_\ell \quad (\forall \ell \in L)}
  \and
  \infer[\with]
  {\vars \semi \cons \semi \G \vdash
  \echoice{\ell : A_\ell}_{\ell \in L} \equiv \echoice{\ell : B_\ell}_{\ell \in L}}
  {\vars \semi \cons \semi \G \vdash
  A_\ell \equiv B_\ell \quad (\forall \ell \in L)}
  \and
  \infer[\tensor]
  {\vars \semi \cons \semi \G \vdash
  A_1 \tensor A_2 \equiv B_1 \tensor B_2}
  {\vars \semi \cons \semi \G \vdash A_1 \equiv B_1 \qquad
  \vars \semi \cons \semi \G \vdash A_2 \equiv B_2}
  \and
  \infer[\lolli]
  {\vars \semi \cons \semi \G \vdash
  A_1 \lolli A_2 \equiv B_1 \lolli B_2}
  {\vars \semi \cons \semi \G \vdash A_1 \equiv B_1 \qquad
  \vars \semi \cons \semi \G \vdash A_2 \equiv B_2}
  \and
  \infer[\one]
  {\vars \semi \cons \semi \G \vdash \one \equiv \one}
  {}
  \and
  \infer[\tassertop]
  {\vars \semi \cons \semi \G \vdash
  \tassert{\phi}{A} \equiv \tassert{\psi}{B}}
  {\vars \semi \cons \proves \phi \leftrightarrow \psi \qquad
  \vars \semi \cons \wedge \phi \semi \G \vdash A \equiv B}
  \and
  \infer[\tassumeop]
  {\vars \semi \cons \semi \G \vdash
  \tassume{\phi}{A} \equiv \tassume{\psi}{B}}
  {\vars \semi \cons \proves \phi \leftrightarrow \psi \qquad
  \vars \semi \cons \wedge \phi \semi \G \vdash A \equiv B}
  \and
  \infer[\exists^k]
  {\vars \semi \cons \semi \G \vdash
  \texists{m}{A} \equiv \texists{n}{B}}
  {\vars, k \semi \cons \semi \G \vdash
  A[k/m] \equiv B[k/n]}
  \and
  \infer[\forall^k]
  {\vars \semi \cons \semi \G \vdash
  \tforall{m}{A} \equiv \tforall{n}{B}}
  {\vars, k \semi \cons \semi \G \vdash
  A[k/m] \equiv B[k/n]}
  \and
  \infer[\bot]
  {\vars \semi \cons \semi \G \vdash
  A \equiv B}
  {\vars \semi \cons \proves \bot}
  \and
  \infer[\m{refl}]
  {\vars \semi \cons \semi \G \vdash V\indv{e} \equiv V[\overline{e}']}
  {\vars \semi \cons \proves e_1 = e_1' \land \ldots \land e_n = e_n'}
  \and 
  \inferrule*[right = $\m{expd}$]
  {V_1[\overline{v_1} \mid \phi_1] = A \in \Sg \and
  V_2[\overline{v_2} \mid \phi_2] = B \in \Sg \\\\
  \gamma = \clo{\vars}{\cons}{V_1\indv{e_1} \equiv V_2\indv{e_2}}\\\\
  \vars \semi \cons \semi \G, \gamma
  \vdash A[\overline{e_1} / \overline{v_1}] \equiv
  B[\overline{e_2}/\overline{v_2}]}
  {\vars \semi \cons \semi \G \vdash
  V_1 \indv{e_1} \equiv V_2 \indv{e_2}}
  \and
  \inferrule*[right=$\m{def}$]
  {
    \clo{\vars'}{\cons'}{V_1[\overline{e_1}'] \equiv V_2[\overline{e_2}']} \in \G \\
    \vars \semi \cons \proves \exists \vars'. \; \cons' \land \overline{e_1}'
    = \overline{e_1} \land \overline{e_2}' = \overline{e_2}
  }
  {
    \vars \semi \cons \semi \G \vdash V_1 \indv{e_1} \equiv V_2 \indv{e_2}
  }
\end{mathpar}
\caption{Algorithmic Rules for Type Equality}
\label{fig:tpeq_rules}
\end{figure}

The rules for type constructors simply compare the components.  If the
type constructors (or the label sets in the $\oplus$ and $\with$ rules)
do not match, then type equality fails (having constructed a
counterexample to bisimulation) unless the $\bot$ rule applies.  This
rules handles the case where the constraints are contradictory and no
communication is possible.

The rule of reflexivity is needed explicitly here (but not in the
version of Gay and Hole) because due to the incompleteness of the
algorithm we may otherwise fail to recognize type variables with equal
index expressions as equal.

Now we come to the key rules, $\m{expd}$ and $\m{def}$.  In the
$\m{expd}$ rule we expand the definitions of $V_1\indv{e_1}$ and
$V_2\indv{e_2}$, and we also add the closure
$\clo{\vars}{\cons}{V_1\indv{e_1} \equiv V_2\indv{e_2}}$ to $\Gamma$.
Since the equality of $V_1\indv{e_1}$ and
$V_2\indv{e_2}$ must hold for all its ground instances, the
extension of $\Gamma$ with the corresponding closure remembers exactly that.

In the $\m{def}$ rule we close off the derivation successfully if all
instances of the equation $V_1\indv{e_1} \equiv V_2\indv{e_2}$ are
already instances of a closure in $\Gamma$.  This is checked by the
entailment in the second premise,
\(
  \vars \semi \cons \proves \exists \vars'.\, \cons' \land
  \overline{E_1} = \overline{e_1} \land \overline{E_2} = \overline{e_2}
\).
This entailment is verified as a closed $\forall\exists$ arithmetic
formula, even if the original constraints $\cons$ and $\cons'$ do not
contain any quantifiers.  While for Presburger arithmetic we can
decide such a proposition using quantifier elimination, other
constraint domains may not permit such a decision procedure.

The algorithm so far is sound, but potentially nonterminating because
when encountering variable/variable equations, we can use the
$\m{expd}$ rule indefinitely.  To ensure termination, we restrict the
$\m{expd}$ rule to the case where \emph{no} formula with the same type
variables $V_1$ and $V_2$ is already present in $\Gamma$.  This also
removes the overlap between these two rules.  Note that if type
variables have no parameters, our algorithm specializes to Gay and
Hole's (with the small optimizations of reflexivity and internal
naming), which means our algorithm is sound and complete on unindexed
types.


\begin{example}[Integer Counter]
  An integer counter with increment ($\mb{inc}$), decrement ($\mb{dec}$)
  and sign-test ($\mb{sgn}$) operations provides type $\m{intctr}[x,y]$,
  where the current value of the counter is $x-y$ for natural numbers $x$
  and $y$.
  \begin{sill}
    \quad $\m{intctr}[x,y] = {\with}\{$\=$\mb{inc} : \m{intctr}[x+1,y],$ \\
    \>$\mb{dec} : \m{intctr}[x,y+1],$ \\
    \>$\mb{sgn} : {\oplus}\{$\=$\mb{neg} : \tassert{x < y} \m{intctr}[x,y],$ \\
    \>\>$\mb{zer} : \tassert{x = y} \m{intctr}[x,y],$ \\
    \>\>$\mb{pos} : \tassert{x > y} \m{intctr}[x,y]\}\}$
  \end{sill}
  Under this definition our algorithm verifies, for example, that
  an increment followed by a decrement does not change the counter value.
  That is,
  \[
    \begin{array}{l}
      x,y \semi \top \semi \cdot \vdash \m{intctr}[x,y] \equiv \m{intctr}[x+1,y+1]
    \end{array}
  \]
  where we have elided the assumptions $x,y \geq 0$.
  When applying $\m{expd}$, we assume
  $\gamma = \clo{x',y'}{\top}{\m{intctr}[x',y'] \equiv \m{intctr}[x'+1,y'+1]}$.  Then, for
  example, in the first branch (for $\m{inc}$) we conclude
  $x,y \semi \top \semi \gamma \vdash \m{intctr}[x+1,y] \equiv \m{intcr}[x+2,y+1]$ using the $\m{def}$ rule and
  the entailment $x,y \semi \top \proves \exists x'.\, \exists y'.\,
  x' = x+1 \land y' = y \land x'+1 = x+2 \land y'+1 = y+1$. The other branches
  are similar.
  \end{example}

\subsection{Soundness of the Type Equality Algorithm}

We prove that the type equality algorithm is sound with respect to the
definition of type equality. The soundness is proved by constructing a
type bisimulation from a derivation of the algorithmic type equality
judgment.  We sketch the key points of the proofs.

The first gap we have to bridge is that the type bisimulation is
defined only for closed types, because observations can only arise
from communication along channels which, at runtime, will be of closed
type.  So, if we can derive
$\vars \semi \cons \semi \cdot \vdash A \equiv B$ then we should
interpret this as stating that for all ground substitutions $\sigma$
over $\vars$ such that $\proves \cons[\sigma]$ we have
$A[\sigma] \equiv B[\sigma]$.

\begin{definition}\label{def:app_algo_tpeq}
  Given a relation $\rel$ on valid ground types and two types $A$ and
  $B$ such that $\vars \semi \cons \vdash A,B\; \mi{valid}$, we write
  $\forall \vars.\, \cons \Rightarrow A \equiv_\rel B$ if for all
  ground substitutions $\sigma$ over $\vars$ such that
  $\proves \cons[\sigma]$ we have $(A[\sigma], B[\sigma]) \in \rel$.

  Furthermore, we write $\forall \vars.\, \cons \Rightarrow
  A \equiv B$ if there exists a type bisimulation $\rel$ such
  that $\forall \vars.\, \cons \Rightarrow A \equiv_\rel B$. 
\end{definition}

Note that if $\vars \semi \cons \proves \bot$, then
$\forall \vars.\, \cons \Rightarrow A \equiv B$ is vacuously true,
since there does not exist a ground substitution $\sigma$ such that
$\proves \cons[\sigma]$.  A key lemma is the following, which is
needed to show the soundness of the $\m{def}$ rule.

\begin{lemma}\label{lem:app_gen_sim}
  Suppose $\forall \vars'. \cons' \Rightarrow V_1 [\overline{e_1}'] \equiv_\rel V_2[\overline{e_2}']$ holds.
  Further assume that $\vars \semi \cons \proves \exists \vars'. \cons' \wedge
  \overline{e_1}' = \overline{e_1} \wedge \overline{e_2}' = \overline{e_2}$
  for some $\vars, \cons, \overline{e_1}, \overline{e_2}$. Then, $\forall \vars. \cons
  \Rightarrow V_1 \indv{e_1} \equiv_\rel V_2 \indv{e_2}$ holds.
\end{lemma}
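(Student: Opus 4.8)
The plan is to argue directly from the definition of $\forall\vars.\,\cons\Rightarrow (-)\equiv_\rel(-)$ (Definition \ref{def:app_algo_tpeq}): I must show that for every ground substitution $\sigma$ over $\vars$ with $\proves\cons[\sigma]$, the pair $(V_1[\overline{e_1}[\sigma]],\, V_2[\overline{e_2}[\sigma]])$ lies in $\rel$. So fix such a $\sigma$. The second hypothesis gives $\vars\semi\cons\proves\exists\vars'.\,\cons'\wedge \overline{e_1}'=\overline{e_1}\wedge\overline{e_2}'=\overline{e_2}$, and instantiating this entailment at $\sigma$ (which is legitimate since $\proves\cons[\sigma]$) yields the truth of $\exists\vars'.\,\cons'\wedge \overline{e_1}'=\overline{e_1}[\sigma]\wedge\overline{e_2}'=\overline{e_2}[\sigma]$ in arithmetic. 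Hence there is a ground substitution $\sigma'$ over $\vars'$ such that $\proves\cons'[\sigma']$ and $\overline{e_1}'[\sigma']=\overline{e_1}[\sigma]$ and $\overline{e_2}'[\sigma']=\overline{e_2}[\sigma]$ as natural numbers.

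Next I apply the first hypothesis, $\forall\vars'.\,\cons'\Rightarrow V_1[\overline{e_1}']\equiv_\rel V_2[\overline{e_2}']$, to this particular $\sigma'$: since $\proves\cons'[\sigma']$, we obtain $(V_1[\overline{e_1}'[\sigma']],\, V_2[\overline{e_2}'[\sigma']])\in\rel$. Now the two witness equalities above tell us that the index tuples agree numerically, i.e. $\overline{e_1}'[\sigma']=\overline{e_1}[\sigma]$ and $\overline{e_2}'[\sigma']=\overline{e_2}[\sigma]$, so the pair we just placed in $\rel$ is literally the pair $(V_1[\overline{e_1}[\sigma]],\, V_2[\overline{e_2}[\sigma]])$ — here one uses that a defined type $V[\overline{e}]$ depends only on the numeric values of $\overline{e}$, not on the syntactic form of the expressions (this is implicit in Definition \ref{def:unfold-arith}, since unfolding substitutes the evaluated indices). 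That is exactly what was to be shown, and since $\sigma$ was arbitrary we conclude $\forall\vars.\,\cons\Rightarrow V_1[\overline{e_1}]\equiv_\rel V_2[\overline{e_2}]$.

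The only point needing a little care — and the closest thing to an obstacle — is the passage from a valid entailment $\vars\semi\cons\proves\exists\vars'.\,(\ldots)$ to an \emph{explicit} witness substitution $\sigma'$: this is just the semantics of the entailment judgment ($\forall\vars.\,\cons\supset\exists\vars'.\,(\ldots)$ holds in arithmetic) together with the fact that our arithmetic is over the naturals, so an existential witness is a genuine natural-number assignment we can package as a ground substitution. Everything else is bookkeeping about substitutions composing correctly and about type definitions being insensitive to the syntactic presentation of their numeric arguments. Note also that if $\vars\semi\cons\proves\bot$ the statement is vacuous (no such $\sigma$ exists), consistent with the remark preceding the lemma, so no separate case analysis is required.
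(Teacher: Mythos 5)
Your proof is correct and follows essentially the same route as the paper's: fix a ground $\sigma$ with $\proves \cons[\sigma]$, instantiate the entailment to extract a witness $\sigma'$ over $\vars'$, apply the hypothesis on $V_1[\overline{e_1}'] \equiv_\rel V_2[\overline{e_2}']$ at $\sigma'$, and transfer the conclusion via the numeric equalities of the index tuples. Your explicit remarks on witness extraction from the existential and on types depending only on the numeric values of their indices are points the paper leaves implicit, but they do not change the argument.
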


\begin{proof}
  To prove $\forall \vars.\, \cons \Rightarrow V_1 \indv{e_1} \equiv_\rel V_2 \indv{e_2}$, it
  is sufficient to show that $V_1[\overline{e_1}[\sigma]] \equiv_\rel V_2[\overline{e_2}[\sigma]]$
  for any substitution $\sigma$ over $\vars$ such that $\proves \cons[\sigma]$.
  Applying this substitution to $\vars \semi \cons \proves \exists \vars'.\,
  \cons' \wedge \overline{e_1}' = \overline{e_1} \wedge \overline{e_2}' = \overline{e_2}$,
  we infer $\exists \vars'.\, \cons' \wedge \overline{e_1}' =
  \overline{e_1}[\sigma] \wedge \overline{e_2}' = \overline{e_2}[\sigma]$
  since $\proves \cons[\sigma]$. Thus, there exists $\sigma'$ over $\vars'$ such that
  $\proves \cons'[\sigma']$ holds, and $\overline{e_1}'[\sigma'] =
  \overline{e_1}[\sigma]$ and $\overline{e_2}'[\sigma'] = \overline{e_2}[\sigma]$.
  And since $\forall \vars'.\, \cons' \Rightarrow V_1[\overline{e_1}'] \equiv_\rel V_2[\overline{e_2}']$,
  we deduce that for any ground substitution (including the current one)
  $\sigma'$ over $\vars'$, $V_1[\overline{e_1}'[\sigma']] \equiv_\rel
  V_2[\overline{e_2}'[\sigma']]$ holds. This implies that $V_1[\overline{e_1}[\sigma]] \equiv_\rel
  V_2[\overline{e_2}[\sigma]]$ since $\overline{e_1}'[\sigma'] =
  \overline{e_1}[\sigma]$ and $\overline{e_2}'[\sigma'] = \overline{e_2}[\sigma]$.
\end{proof}

We construct the bisimulation from a derivation of
$\vars \semi \cons \semi \G \vdash A \equiv B$ by (i) collecting the
conclusions of all the sequents, excepting only the $\m{def}$ rule,
and (ii) forming all ground instances from them.

\begin{definition}
  Given a derivation $\mathcal{D}$ of
  $\vars \semi \cons \semi \G \vdash A \equiv B$, we define the set
  $\mathcal{S}(\mathcal{D})$ of closures.  For each sequent
  $\vars' \semi \cons' \semi \G' \vdash A' \equiv B'$ (except the
  conclusion of the $\m{def}$ rule) we include the closure
  $\clo{\vars'}{\cons'}{A' \equiv B'}$ in $\mathcal{S}(\mathcal{D})$.
\end{definition}

\begin{theorem}\label{thm:tpeq_sound}
  If $\vars \semi \cons \semi \cdot \vdash A \equiv B$, then
  $\forall \vars.\, \cons \Rightarrow A \equiv B$.
\end{theorem}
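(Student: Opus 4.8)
The plan is to build a concrete type bisimulation $\rel$ directly from a derivation $\dc$ of $\vars \semi \cons \semi \cdot \vdash A \equiv B$, namely $\rel = \{(C[\sigma], D[\sigma]) : \clo{\vars'}{\cons'}{C \equiv D} \in \mathcal{S}(\dc),\ \sigma \text{ ground over } \vars',\ \proves \cons'[\sigma]\}$. Since the conclusion of $\dc$ yields the closure $\clo{\vars}{\cons}{A \equiv B}$ (unless the conclusion is itself a $\m{def}$ step, in which case we add it by hand, or the derivation starts with $\bot$, handled separately), once $\rel$ is shown to be a type bisimulation we immediately get $\forall\vars.\,\cons \Rightarrow A \equiv_\rel B$, hence $\forall\vars.\,\cons \Rightarrow A \equiv B$ by \autoref{def:app_algo_tpeq}. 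The case $\vars \semi \cons \proves \bot$ is vacuous as already noted.

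The core of the argument is checking that $\rel$ satisfies every clause of \autoref{def:rel} and \autoref{def:rel-arith}. So fix $(C[\sigma], D[\sigma]) \in \rel$ coming from a closure $\clo{\vars'}{\cons'}{C \equiv D}$ that labels some sequent $\vars' \semi \cons' \semi \G' \vdash C \equiv D$ of $\dc$, with $\proves \cons'[\sigma]$. We proceed by case analysis on the last rule applied at that sequent. The structural rules ($\oplus$, $\with$, $\tensor$, $\lolli$, $\one$) are routine: each premise is again a sequent of $\dc$ whose closure, instantiated by $\sigma$, gives exactly the pair the bisimulation clause requires; note that $C$ and $D$ are already unfolded here because the internal-naming pass guarantees definitions and constructors alternate, so $\unfold{C[\sigma]} = C[\sigma]$. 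For the $\tassertop$ rule, the premise $\vars' \semi \cons' \proves \phi \leftrightarrow \psi$ instantiated at $\sigma$ gives either $\proves \phi[\sigma] \wedge \psi[\sigma]$ — in which case the closure from the second premise (whose constraint is $\cons' \wedge \phi$, still satisfied by $\sigma$) supplies the continuation pair, matching clause (i) — or $\proves \lnot\phi[\sigma] \wedge \lnot\psi[\sigma]$, matching clause (ii); the $\tassumeop$ rule is identical. For $\exists^k$ and $\forall^k$, the premise sequent has variable context $\vars',k$ and the same constraint $\cons'$, so for any $i \in \mathbb{N}$ the extended substitution $\sigma[i/k]$ satisfies $\cons'$ and witnesses $(C'[\sigma][i/m], D'[\sigma][i/n]) \in \rel$. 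The $\m{refl}$ rule uses $\vars' \semi \cons' \proves \overline{e} = \overline{e}'$: under $\sigma$ the two variable instantiations become literally the same closed type, and one observes that the diagonal can be freely added to any type bisimulation, or handles it by a small separate lemma. The $\m{expd}$ rule is immediate, since its premise sequent is part of $\dc$ and its conclusion's closure is, by construction of $\mathcal{S}(\dc)$, already the pair in question, which unfolds (by \autoref{def:unfold-arith}) to the premise's left/right types.

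The main obstacle is the $\m{def}$ rule, and this is exactly where \autoref{lem:app_gen_sim} is needed. A sequent closed by $\m{def}$ has conclusion $\vars' \semi \cons' \semi \G' \vdash V_1\indv{e_1} \equiv V_2\indv{e_2}$ with a closure $\clo{\vars''}{\cons''}{V_1[\overline{e_1}'] \equiv V_2[\overline{e_2}']} \in \G'$ and the entailment $\vars' \semi \cons' \proves \exists\vars''.\,\cons'' \wedge \overline{e_1}' = \overline{e_1} \wedge \overline{e_2}' = \overline{e_2}$. The subtlety is that a closure in $\G'$ was placed there by some earlier $\m{expd}$ step further down the derivation, so its own sequent — and hence its closure — is in $\mathcal{S}(\dc)$; thus $\forall\vars''.\,\cons'' \Rightarrow V_1[\overline{e_1}'] \equiv_\rel V_2[\overline{e_2}']$ holds. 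Applying \autoref{lem:app_gen_sim} with the displayed entailment transfers this to $\forall\vars'.\,\cons' \Rightarrow V_1\indv{e_1} \equiv_\rel V_2\indv{e_2}$, which is precisely what is required so that the closure $\clo{\vars'}{\cons'}{V_1\indv{e_1} \equiv V_2\indv{e_2}}$ — were it in $\mathcal{S}(\dc)$ — would be respected; but in fact $\m{def}$-conclusions are deliberately excluded from $\mathcal{S}(\dc)$, so the only place this reasoning is invoked is when such a pair arises as a continuation demanded by another clause, and there the argument above closes the loop. Making this "every closure in every $\G'$ traces back to an $\m{expd}$ sequent already in $\mathcal{S}(\dc)$" precise — essentially an invariant on how $\G$ grows along any root-to-leaf path of $\dc$ — is the one genuinely delicate bookkeeping step; everything else is a mechanical check against the bisimulation clauses.
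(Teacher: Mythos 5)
Your proposal is correct and follows essentially the same route as the paper's proof: build $\rel$ from all ground instances of the closures in $\mathcal{S}(\mathcal{D}_0)$, verify the bisimulation clauses case by case (with $\bot$ handled vacuously), and discharge the $\m{def}$ rule via \autoref{lem:app_gen_sim}. Your extra remarks on the $\m{refl}$ case (closing $\rel$ under the diagonal) and on tracing closures in $\G$ back to earlier $\m{expd}$ sequents make explicit two details the paper elides, but they do not change the argument.
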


\begin{proof}
  We are given a derivation $\mathcal{D}_0$ of $\vars_0 \semi \cons_0 \semi \cdot \vdash A_0 \equiv B_0$.
  Construct $\mathcal{S}(\mathcal{D}_0)$ and define a relation $\rel$ on closed
  valid types as follows:
  \begin{eqnarray*}
    \rel = \{(A[\sigma], B[\sigma]) \mid \clo{\vars}{\cons}{A \equiv B}
    \in \mathcal{S}(\mathcal{D}_0)\;\mbox{and}\;\sigma\;\mbox{over}\;\vars\;\mbox{with}\;\proves \cons[\sigma]\}
  \end{eqnarray*}
  We prove that $\rel$ is a type bisimulation. Then our theorem follows
  since the closure
  $\clo{\vars_0}{\cons_0}{A_0 \equiv B_0} \in \mathcal{S}(\mathcal{D}_0)$.

  Consider $(A[\sigma], B[\sigma]) \in \rel$ where
  $\clo{\vars}{\cons}{A \equiv B} \in \mathcal{S}(\mathcal{D}_0)$ for
  some $\sigma$ over $\vars$ and $\proves \cons[\sigma]$.
  
  First, consider the case where $\vars \semi \cons \proves \bot$. Under such a constraint,
  $\vars \semi \cons \semi \cdot \vdash A \equiv B$ holds true due to the $\bot$ rule.
  Furthermore, $\forall \vars.\, \cons \Rightarrow A \equiv B$ holds vacuously, and
  the algorithm is sound. For the remaining cases, we case analyze on the structure of $A[\sigma]$
  and assume that there exists a ground substitution $\sigma$ such that $\proves \cons[\sigma]$.

  Consider the case where $A = \ichoice{\ell : A_\ell}_{\ell \in L}$. Since $A$ and $B$
  are both structural, $B = \ichoice{\ell : B_\ell}_{\ell \in L}$. Since
  $\clo{\vars}{\cons}{A \equiv B} \in \mathcal{S}(\mathcal{D}_0)$, by definition of
  $\mathcal{S}(\mathcal{D}_0)$, we get
  $\clo{\vars}{\cons}{A_\ell \equiv B_\ell} \in \mathcal{S}(\mathcal{D}_0)$ for all
  $\ell \in L$. By the definition of $\rel$, we get that
  $(A_\ell[\sigma], B_\ell[\sigma]) \in \rel$. Also,
  $A[\sigma] = \ichoice{\ell : A_\ell[\sigma]}_{\ell \in L}$ and similarly,
  $B[\sigma] = \ichoice{\ell : B_\ell[\sigma]}_{\ell \in L}$.  Hence, $\rel$ satisfies the
  appropriate closure condition for a type bisimulation.

  Next, consider the case where $A = \tassert{\phi} A'$. Since $A$ and $B$ are both
  structural, $B = \tassert{\psi} B'$. Since
  $\clo{\vars}{\cons}{A \equiv B} \in \mathcal{S}(\mathcal{D}_0)$, we obtain
  $\vars \semi \cons \proves \phi \leftrightarrow \psi$ and
  $\clo{\vars}{\cons \land \phi}{A' \equiv B'} \in \mathcal{S}(\mathcal{D}_0)$. Thus, for
  any substitution $\sigma$ such that $\proves \cons[\sigma] \land \phi [\sigma]$,
  we get that $(A'[\sigma], B'[\sigma]) \in \rel$ with
  $A[\sigma] = \tassert{\phi[\sigma]} A'[\sigma]$ and
  $B[\sigma] = \tassert{\psi[\sigma]} B'[\sigma]$. Since $\proves \phi[\sigma]$ and
  and $\vars \semi \cons \proves \phi \leftrightarrow \psi$ we also obtain $\proves \psi[\sigma]$
  and the closure condition is satisfied.

  Next, consider the case where $A = \texists{m} A'$. Since $A$ and $B$ are both
  structural, $B = \texists{n} B'$. Since $\clo{\vars}{\cons}{A \equiv B}
  \in \mathcal{S}(\mathcal{D}_0)$, we get that $\clo{\vars,k}{\cons}{A'[k/m] \equiv B'[k/n]}
  \in \mathcal{S}(\mathcal{D}_0)$. Since $k$ was chosen fresh and does not occur in
  $\cons$, we obtain that for any $i \in \mathbb{N}$ we have $\proves \cons[\sigma, i/k]$
  and therefore $(A'[\sigma, i/k], B'[\sigma, i/k]) \in \rel$ for all $i \in \mathbb{N}$ and
  the closure condition is satisfied.


  The only case where a conclusion is not added to $\mathcal{S}(\mathcal{D}_0)$ is the $\m{def}$
  rule. In this case, adding $(\forall \vars.\, \cons \Rightarrow V_1 \indv{e_1} \equiv V_2
  \indv{e_2})$ is redundant: \autoref{lem:app_gen_sim} states that
  $V_1[\overline{e_1}[\sigma]] \equiv_\rel V_2 [\overline{e_2}[\sigma]]$ which implies
  $(V_1[\overline{e_1}[\sigma]], V_2[\overline{e_2}[\sigma]]) \in \rel$.
\end{proof}

\subsection{Type Equality Declarations}
\label{sec:eqtypes}

Even though the type equality algorithm in \autoref{sec:algorithm} is incomplete, we have
yet to find a natural example where it fails after we added reflexivity as a general rule.
But since we cannot see a simple reason why this should be so, we made our type equality
algorithm extensible by the programmer via an additional form of declaration \[
  \forall \vars.\, \cons \Rightarrow V_1[\overline{e_1}] \equiv V_2[\overline{e_2}]
\]
in signatures.  Let $\Gamma_\Sigma$ denote the set of all such
declarations.  Then we check
\[
  \vars \semi \cons \semi \Gamma_\Sigma \vdash V_1[\overline{e_1}] \equiv V_2[\overline{e_2}]
\]
for each such declaration, seeding the construction of a bisimulation
with all the given equations.  Then, when type-checking has to decide
the equality of two types, it starts not with the empty context
$\Gamma$ but with $\Gamma_\Sigma$.  Our soundness proof can easily
accommodate this more general algorithm.




\section{Implementation and Further Examples}
\label{sec:impl}

We have implemented the algorithm presented in \autoref{sec:algorithm} as part of the Rast
programming language \cite{Das20fscd}, whose name derives from ``Resource-Aware Session
Types''. Rast is based on intuitionistic linear
sessions~\cite{Caires10concur,Caires16mscs} extended with general equirecursive types
and recursively defined processes. We do not explicitly dualize types~\cite{Wadler12icfp}
but distinguish providers and clients that are connected by a private channel. In parallel
work we have proved type safety for Rast, which includes type preservation (session
fidelity) and global progress (deadlock freedom). The open-source implementation is
written in Standard ML and currently comprises about 7500 lines of source
code~\cite{Rast20repo}.

Rast supports indexed types, quantifiers, and arithmetic constraints, following the
presentation in this paper with minor syntactic differences. In addition, Rast has
temporal~\cite{Das18icfp} and ergometric~\cite{Das18lics} types that capture parallel
and sequential complexity of programs. These bounds often depend on intrinisic
properties of the data structures (such as the length of a queue or the value of a binary
number) which are expressed as arithmetic indices.

Rast's linear type checker is bidirectional, which means that only process definitions
need to be annotated with their types. In the so-called \emph{explicit syntax} type
checking is then straightforward, breaking down the structure of the type and unfolding
definitions, except for calls to type equality (which are necessary for forwarding,
process invocations, and sending of channels). The implementation also supports an
\emph{implicit syntax} in which some parts of the program, specifically those asserting or
assuming constraints and a few others relevant to resource analysis, can be omitted from
the source and are reconstructed. The reconstructed code is then passed through the type
checker as ultimate arbiter.

We use a straightforward implementation of Cooper's algorithm~\cite{cooper1972theorem} to
decide Presburger arithmetic with two small but significant optimizations. One takes
advantage of the fact that we are working over natural numbers rather than integers, the
other is to eliminate constraints of the form $x = e$ by substituting $e$ for $x$ in order
to reduce the number of variables. We also extend our solver to handle non-linear constraints.
Since non-linear arithmetic is undecidable, in general, we use a normalizer which collects
coefficients of each term in the multinomial expression.
To check $e_1 = e_2$, we normalize $e_1 - e_2$ and check that
each coefficient of the normal form is $0$.
To check $e_1 \geq e_2$, we normalize $e_1 - e_2$ and check that
each coefficient is non-negative.


\begin{table}[t]
  \centering
  \begin{tabular}{l r r r r r}
  \textbf{Module} & \textbf{iLOC} & \textbf{eLOC} & \textbf{\#Defs} & \textbf{R (ms)} & \textbf{T (ms)} \\
  \toprule
  arithmetic & 69 & 143 & 8 & 0.353 & 1.325 \\
  integers & 90 & 114 & 8 & 0.200 & 1.074 \\
  linlam & 54 & 67 & 6 & 0.734 & 4.003 \\
  list & 244 & 441 & 29 & 1.534 & 3.419 \\
  primes & 90 & 118 & 8 & 0.196 & 1.646 \\
  segments & 48 & 65 & 9 & 0.239 & 0.195 \\
  ternary & 156 & 235 & 16 & 0.550 & 1.967 \\
  theorems & 79 & 141 & 16 & 0.361 & 0.894 \\
  tries & 147 & 308 & 9 & 1.113 & 5.283 \\
  \midrule
  \textbf{Total} & \textbf{977} & \textbf{1632} & \textbf{109} & \textbf{5.280} & \textbf{19.806} \\
  \bottomrule \\
  \end{tabular}
  \caption{Case Studies}
  \label{tab:case_study}
\end{table}

We have a variety of 21 examples implemented, totaling about 3700 lines of code,
for which complete code can be found in our open source repository~\cite{Rast20repo}.
Table~\ref{tab:case_study} describes the results for nine representative case studies: iLOC
describes the lines of source code in implicit syntax, eLOC describes the lines of code
after reconstruction, \#Defs shows the number of process definitions, R (ms) and T (ms)
show the reconstruction and type-checking time in milliseconds respectively. The
experiments were run on an Intel Core i5 2.7 GHz processor with 16 GB 1867 MHz DDR3
memory. We briefly describe each case study.

\begin{enumerate}
\item \textbf{arithmetic}: natural numbers in unary and binary
  representation indexed by their value and processes implementing
  standard arithmetic operations.

\item \textbf{integers}: an integer counter represented using two indices
  $x$ and $y$ with value $x-y$.

\item \textbf{linlam}: expressions in the linear $\lambda$-calculus
  indexed by their size with an \emph{eval} process to evaluate them
  (see below for an excerpt).

\item \textbf{list}: lists indexed by their size
  with standard operations (e.g., \emph{append, reverse, map}).

\item \textbf{primes}: implementation of the sieve of Eratosthenes.

\item \textbf{segments}: type $\m{seg}[n] =
  \forall k. \m{list}[k] \lolli \m{list}[n+k]$ representing partial lists
  with constant-work append operation.

\item \textbf{ternary}: natural numbers represented in balanced
  ternary form with digits $0, 1, -1$, indexed by their value, and
  some standard operations on them.

\item \textbf{theorems}: processes representing (circular~\cite{Derakhshan19arxiv})
  proofs of simple arithmetic theorems.

\item \textbf{tries}: a trie data structure to store multisets of
  binary numbers, with constant amortized work insertion and deletion,
  verified with ergometric types.
\end{enumerate}

\subparagraph*{Linear $\lambda$-calculus}

We briefly sketch the types in an implementation of the (untyped) linear
$\lambda$-calculus in which the index objects track the size of the expression,
because it uses multiple feature of the type system.
\begin{sill}
  $\m{exp}[n] = \ichoice{$\=$\mb{lam} : \tassert{n > 0}
    \tforall{n_1} \m{exp}[n_1] \lolli \m{exp}[n_1+n-1],$\\
    \>$\mb{app} : \texists{n_1} \texists{n_2} \tassert{n = n_1+n_2+1} \m{exp}[n_1] \tensor \m{exp}[n_2]}$
\end{sill}
An expression is either a $\lambda$-abstraction (sending label $\mb{lam}$) or an
application (sending label $\mb{app}$). In case of $\mb{lam}$, the continuation receives a
number $n_1$ and an argument of size $n_1$ and then behaves like the body of the
$\lambda$-abstraction of size $n_1+n-1$. In case of $\mb{app}$, it will send $n_1$ and
$n_2$ such that $n = n_1 +n_2+1$ followed an expression of size $n_1$ and then behave as
an expression of size $n_2$.

A value can only be a $\lambda$-abstraction
\begin{sill}
$\m{val}[n] = \ichoice{$\=$\mb{lam} : \tassert{n > 0}
  \tforall{n_1} \m{exp}[n_1] \lolli \m{exp}[n_1+n-1]}$
\end{sill}
so the $\mb{app}$ label is not permitted. Type checking verifies that that the result of
evaluating a linear $\lambda$-term is no larger than the original term. The declaration
below expresses that $\mi{eval}\,[n]$ is client to a process sending a
$\lambda$-expression of size $n$ along channel $e$ and provides a value of size $k$, where
$k \leq n$.
\begin{sill}
  $(e : \m{exp}[n]) \vdash \mi{eval}\,[n] :: (v : \texists{k} \tassert{k \leq n}
  \m{val}[k])$
\end{sill}

\section{Further Related Work}
\label{sec:related}

Traditional languages with dependent type refinements such as
Zenger's~\cite{Zenger97} or DML~\cite{Xi99popl} only use the rule of
reflexivity as a criterion for equality of indexed types.  This is
justified in the context of these functional languages because data
types are generative and therefore \emph{nominal} in nature.  This is
also true for more recent languages with linearity and value-dependent
types such as Granule~\cite{Orchard19icfp}.

Session type systems that allow dependencies are label-dependent
session types~\cite{Thiemann20popl} and
richer linear type
theories~\cite{Toninho11ppdp,Pfenning11cpp,Toninho18fossacs}.  Toninho
et al.~\cite{Toninho11ppdp,Pfenning11cpp} allow sufficient
dependencies that, in general, proofs must be sent in some
circumstances.  They do not provide a type equality algorithm or
implementation.  In a more recent paper, Toninho et
al.~\cite{Toninho18fossacs} propose a dependent type theory with rich
notions of value and process equality based on $\beta\eta$-congruences
and certain process equalities, but they do not discuss decidability
or algorithms for type checking or type equality.  Wu and
Xi~\cite{Wu17arxiv} propose a dependent session type system
based on ATS~\cite{Xi03types} formalizing type equality in
terms of subtyping and regular constraint relations.  They mention
recursive session types as a possible extension, but do not develop
them nor investigate properties of the required type equality.

Linearly refined session types~\cite{Baltazar12linearity,Franco13sefm}
extend the $\pi$-calculus with capabilities from a fragment of
multiplicative linear logic. These capabilities encode an
authorization logic enabling fine-grained specifications and are thus
not directly comparable to arithmetic
refinements.
Session types with limited arithmetic refinements (only base
types could be refined) have been proposed for the purpose of runtime
monitoring~\cite{Gommerstadt18esop,Gommerstadt19phd}, which is
complementary to our uses for static verification.  They have also
been proposed to capture work~\cite{Das18lics,Das19Nomos} and parallel
time~\cite{Das18icfp}, but parameterization over index objects was
left to an informal meta-level and not part of the object language.
Consequently, these languages contain neither constraints nor
quantifiers, and the metatheory of type equality, type checking, and
reconstruction in the presence of index variables was not developed.

Context-free session types~\cite{Thiemann16icfp} are another generalization of
basic session types in a different direction, essentially allowing the
concatenation of sessions. This generalization has decidable type checking and
type equality problems that have been shown to be efficient in
practice~\cite{Almeida20tacas}.

Asynchronous session types~\cite{Gay10jfp} have a notion of subtyping
under different assumptions regarding communication
behavior~\cite{Mostrous15iandc}.  The resulting subtyping relation
also turns out to be undecidable~\cite{Bravetti17iandc,Lange17fossacs}
with the development of recent practical incomplete
algorithms~\cite{Bravetti19concur}.  The expressive power of
asynchronous session subtyping seems incomparable to our
arithmetically refined session types.

\section{Conclusion}
\label{sec:conc}

This paper explored the metatheory of session types with arithmetic
refinements, showing the undecidability of type equality.
Nevertheless, we have shown a sound, but incomplete algorithm that has
performed well over a range of examples in our Rast implementation.

Natural extensions include nonlinear arithmetic and other constraint
domains, balancing practicality of type checking with expressive
power.  We would also like to generalize from type equality to
subtyping, replacing the notion of bisimulation with a simulation.
Clearly, this will be undecidable as well, but the pioneering work by
Gay and Hole and the characteristics of our algorithms suggest that it
should extend cleanly and remain practical.

Finally, we would also like to generalize our approach to a mixed
linear/nonlinear language~\cite{Benton94csl} or all the way to adjoint
session types~\cite{Pfenning15fossacs,Pruiksma19places}.  Since the
main issues of type equality are orthogonal to the presence or absence
of structural properties, we conjecture that the algorithm proposed
here will extend to this more general setting.

\bibliography{refs}

\end{document}